\newtheorem{theorem}{Theorem}
\newtheorem{corollary}[theorem]{Corollary}
\newtheorem{definition}[theorem]{Definition}
\newtheorem{lemma}[theorem]{Lemma}
\begin{document}
\title{A State-Dependent Noncontextuality Inequality \\ 
in Algebraic Quantum Theory}
\author{Yuichiro Kitajima}
\maketitle

\begin{abstract}
The noncontextuality condition states that a value of any observable is independent of which other compatible observable is measured jointly with it. Klyachko, Can, Binicio{\u{g}}lu, and Shumovsky have introduced an inequality which holds if there is a noncontextual hidden variable theory. It is called KCBS inequality, which is state-dependent. Its violation shows a contradiction between predictions of quantum theory and noncontextual hidden variable theories. In the present paper, it is shown that there is a state which does not violate KCBS inequality in the case of quantum mechanics of finite degrees of freedom, and that any normal state violates it in the case of algebraic quantum field theory. It is a difference between quantum mechanics of finite degrees of freedom and algebraic quantum field theory from a point of view of KCBS inequality.
\end{abstract}

\section{Introduction}

Quantum theory is a probabilistic theory. It is impossible to predict a definite value of every observable, and only the probability distributions are given. The basic idea of hidden variable theories is to argue that probability arises because quantum states do not represent the ultimate information about the system, and that there are hidden variables by which values of all observables would be uniquely determined.

The necessity of a completion of quantum theory with additional hidden variables was advocated by Einstein, Podolsky, and Rosen \cite{einstein1935can}. They used the locality condition in their argument. It states that, if two measurements events are spacelike separated, a measurement performed on the first system cannot affect a measurement on the second system. Bell subsequently developed their argument, and showed that a hidden variable theory satisfying the locality condition must obey an inequality, which is nowadays known as Bell inequality \cite{bell1964einstein,clauser1969proposed}. This inequality is based on only the locality condition and independent of quantum mechanics. Thus, it is testable in experiments. Its conclusive experimental violation shows a contradiction between testable predictions of quantum theory and local hidden variable theories \cite{aspect1982experimental}. 

Another result of hidden variables in quantum theory is Bell-Kochen-Specker (BKS) theorem \cite{bell1966problem,kochen1967problem,mermin1993hidden}. In this theorem, the noncontextuality condition is used instead of the locality condition. The noncontextuality condition states that a value of any observable is independent of which other compatible observable is measured jointly with it. Two measurements are said to be compatible if they can be performed jointly on the same system without disturbing each other, and a measurement context is then defined as a set of compatible measurement. Bell inequality is a particular type of noncontextuality inequality in which measurements are not only compatible but also spacelike separated. Thus the noncontextuality condition is a generalization of the locality condition. 

BKS theorem shows that there is no noncontextual hidden variable theory by a logical contradiction. This argument is state-independent while Bell inequality is state-dependent. Recently, Klyachko, Can, Binicio{\u{g}}lu, and Shumovsky \cite{klyachko2008simple} have introduced an inequality which holds if there is a noncontextual hidden variable theory. It is called KCBS inequality, which is state-dependent as well as Bell inequality. A difference between them is that KCBS inequality requires not the locality condition but the noncontextuality condition. Its violation shows a contradiction between predictions of quantum theory and noncontextual hidden variable theories. KCBS inequality elevates Bell-Kochen-Specker theorem to experimentally testable propositions because it is based only the noncontextuality condition and independent of quantum mechanics. It can be tested whether KCBS inequality holds or not, and two recent experiments have shown that it does not hold in quantum mechanics \cite{lapkiewicz2011experimental,ahrens2013two}.

In the present paper, we examine KCBS inequality in quantum mechanics of finite degrees of freedom and algebraic quantum field theory. In Section \ref{KCBS-section}, we sketch the reason why KCBS inequality is related to noncontextual hidden variable theories. Its violation shows a contradiction between predictions of quantum theory and noncontextual hidden variable theories. It is known that all pure normal states violate KCBS inequality in quantum mechanics of finite degrees of freedom. In Section \ref{finite-section} it is shown that a tracial state does not violate KCBS inequality in the case of quantum mechanics of finite degrees of freedom. In Section \ref{type-III-section} we examine KCBS inequality in algebraic quantum field theory. A tracial states and a pure normal state do not exist in algebraic quantum field theory because any local algebra is of type III \cite[Section 4]{clifton2001entanglement}. Thus, the argument of KCBS inequality in quantum mechanics of finite degrees of freedom cannot apply to algebraic quantum field theory. It is shown that any normal state violates it in the case of algebraic quantum field theory in Section \ref{type-III-section}. It is a difference between quantum mechanics of finite degrees of freedom and algebraic quantum field theory from a point of view of KCBS inequality.

\section{KCBS inequality}
\label{KCBS-section}
If it is possible to assign values to some observables, there is a value assignment on these observables \cite[Definition 2.2 and Lemma 2.2]{doring2005kochen} \cite[Section 5.1]{redhead1987incompleteness}. A mapping $v$ from a set $\mathbb{A}$ of self-adjoint operators in a von Neumann algebra $\mathfrak{N}$ to the set of all real numbers is called a value assignment on $\mathbb{A}$ if for any mutually commuting self-adjoint operators $A, B \in \mathbb{A}$
\begin{equation}
\label{valuation}
\begin{split}
& v(A+B)=v(A)+v(B), \\
& v(AB)=v(A)v(B), \\
&v(I)=1, \\
&v(0)=0.
\end{split}
\end{equation}
KCBS inequality is derived with a value assignment. There are two forms \cite{cabello2010non}. 

First, consider five self-adjoint operators $A_i$ such that $A_i^2=I$ and $[A_i, A_{i+1}]=0$ for any $i \in \{ 0, \dots ,4 \}$ and $i+1$ is understood mod $5$. Then $A_i$ and $A_{i+1}$ can be jointly measured without mutual disturbance, that is, they are compatible. 
The noncontextuality condition states that a value of any observable is independent of which other compatible observable is measured jointly with it. Thus, a noncontextual hidden variable theory is that in which the value of $A_{i}$ is independent of whether $A_{i}$ is measured together with $A_{i-1}$ or together with $A_{i+1}$ although a context represented by $A_{i-1}$ and $A_{i}$ is not necessarily same as that represented by $A_{i}$ and $A_{i+1}$. 
If such a theory exists, there is a value assignment $v_{1}$ on $\{A_0, \dots, A_4 \}$. By Equation (\ref{valuation}), $v_{1}(A_i)=-1, 1$ and $v_{1}(A_iA_{i+1})=v_{1}(A_i)v_{1}(A_{i+1})$ for any $i \in \{0,\dots, 4 \}$. Thus
\begin{equation}
\label{kcbs-inequality-value}
\sum_{i=0}^{4}v_{1}(A_{i}A_{i+1}) \geq -3.
 \end{equation}
Suppose that for any normal state $\psi$ of $\mathfrak{N}$ there is a hidden variable $\lambda \in \Lambda$ with which the the value assignment  $v_{1}$ is labeled, and a probability measure $\mu$ on the space $\Lambda$ such that
\begin{equation}
\psi(A_{i}A_{i+1})=\sum_{\Lambda} v_{1}(A_{i}A_{i+1} | \lambda) \mu(\lambda)
\end{equation}
for any $i \in \{ 0, \dots, 4 \}$ \cite[p. 697]{bub2009contextuality}. It requires that the statistics of $v_1$ should coincide with the result of applying the statistical algorithm of quantum mechanics. Then
\begin{equation}
\label{kcbs-inequality-value-2}
\begin{split}
\psi \left( \sum_{i=0}^{4} A_{i}A_{i+1} \right) 
&=\sum_{i=0}^{4} \psi \left( A_{i}A_{i+1} \right)  \\
&=\sum_{i=0}^{4} \sum_{\Lambda} v_{1}(A_{i}A_{i+1} | \lambda) \mu(\lambda) \\
&=\sum_{\Lambda}  \sum_{i=0}^{4} v_{1}(A_{i}A_{i+1} | \lambda)  \mu(\lambda) \\
&\geq -3.
\end{split}
\end{equation}

Inequality (\ref{kcbs-inequality-value-2}) is the first form of KCBS inequality \cite{klyachko2008simple,cabello2010non}. Because it does not contain a hidden variable $\lambda$, it is testable in experiments. Two recent experiments have shown that it does not hold in quantum mechanics \cite{lapkiewicz2011experimental,ahrens2013two}.

Next, consider another version of KCBS inequality.  Let $P_0, \dots, P_4$ be five projections in a von Neumann algebra $\mathfrak{N}$ such that $P_iP_{i+1}=0$ for any $i \in \{0, \dots, 4 \}$, where $i+1$ is understood mod $5$. These projections represent 5 yes-no questions such that $P_{i}$ and $P_{i+1}$ are compatible and exclusive. 
In a noncontextual hidden variable theory, the value of $P_i$ is independent of whether $P_i$ is measured together with $P_{i-1}$ or together with $P_{i+1}$. If such a theory exists, there is a value assignment $v_{2}$ on $\{ P_0, \dots, P_4 \}$. By Equation (\ref{valuation}), $v_{2}(P_i)=0,1$ and $v_{2}(P_{i})v_{2}(P_{i+1})=0$ for any $i \in \{ 0, \dots, 4 \}$. Thus
\begin{equation}
\label{kcbs-inequality2-value}
\sum_{i=0}^{4} v_{2}(P_{i}) \leq 2.
\end{equation}
Suppose that for any normal state $\psi$ of $\mathfrak{N}$ there is a hidden variable $\lambda \in \Lambda$ with which the the valuation function $v_{2}$ is labeled, and a probability measure $\mu$ on the space $\Lambda$ such that
\begin{equation}
\psi(P_{i})=\sum_{\Lambda} v_{2} (P_{i} | \lambda) \mu(\lambda)
\end{equation}
for any $i \in \{0, \dots , 4 \}$ \cite[p. 697]{bub2009contextuality}.
It requires that the statistics of $v_2$ should coincide with the result of applying the statistical algorithm of quantum mechanics. Then
\begin{equation}
\label{kcbs-inequality2-value-2}
\begin{split}
\psi \left( \sum_{i=0}^{4} P_{i} \right) &=\sum_{i=0}^{4} \psi (P_{i}) \\
&=\sum_{i=0}^{4}  \sum_{\Lambda} v_{2} (P_{i} | \lambda) \mu(\lambda) \\
&=\sum_{\Lambda}  \sum_{i=0}^{4}  v_{2} (P_{i} | \lambda) \mu(\lambda) \\
&\leq 2.
\end{split}
\end{equation}
Inequality (\ref{kcbs-inequality2-value-2}) is the second form of KCBS inequality \cite{bub2009contextuality,badziag2011pentagrams,cabello2010non,cabello2013simple}. In this paper, we examine this form. A violation of KCBS inequality is defined as follows.

\begin{definition}
\label{KCBS-definition}
Let $\mathfrak{N}$ be a von Neumann algebra, and let $\psi$ be a normal state of $\mathfrak{N}$. We say that $\psi$ violates KCBS inequality if there are projections $P_0, \dots P_4 \in \mathfrak{N}$ such that $P_iP_{i+1}=0$, where $i+1$ is understood mod $5$, and 
\[ \psi(P_{0}+P_{1}+P_{2}+P_{3}+P_{4}) > 2. \]
\end{definition}

KCBS inequality is state-dependent as well as Bell inequality. A difference between them is that KCBS inequality requires not the locality condition but the noncontextuality condition. Its violation shows a contradiction between predictions of quantum theory and noncontextual hidden variable theories.


BKS theorem shows that there are no noncontextual hidden variable theories in algebraic quantum field theory as well as quantum mechanics of finite degrees of freedom \cite{doring2005kochen,kitajima2006problem}. Its approach is different from KCBS inequality. Suppose that a value assignment $v_{3}$ on a set of all projections in a von Neumann algebra $\mathfrak{N}$ which has neither direct summand of type $I_1$ nor direct summand of type $I_2$. By generalized Gleason theorem \cite{maeda1989probability}, $v_{3}$ can be extended to a dispersion-free state of $\mathfrak{N}$ \cite[Lemma 5.1]{hamhalter1993pure}. But this state does not exist in $\mathfrak{N}$. Thus a value assignment $v_{3}$ does not exist in $\mathfrak{N}$ \cite[Theorem 5.2]{kitajima2006problem}. 

There are two differences between KCBS inequality and BKS theorem. First, the sets on which the value assignments $v_1$, $v_2$, and $v_3$ are defined are different. Second, KCBS inequality cannot be derived without a hidden variable $\lambda \in \Lambda$ with which the the value assignment is labeled and a probability measure $\mu$ on the space $\Lambda$, while they do not play an important role in BKS theorem. Thus, BKS theorem is a different argument from KCBS inequality although their conclusions are same. A violation of KCBS inequality reinforces the conclusion of BKS theorem.


While a C*-algebra does not necessarily contain a non-zero projection, a von Neumann algebra contains it. Thus, we examine a von Neumann algebra because KCBS inequality cannot be defined without five non-zero projections. There will, in general, be many non-normal states of a von Neumann algebra $\mathfrak{N}$. By Fell's theorem \cite{fell1960dual} \cite[p. 428]{clifton2001rindler}, any non-normal state $\psi'$ of $\mathfrak{N}$ can be weak* approximated by a normal state of $\mathfrak{N}$, that is, for any real number $\epsilon > 0$, and for each finite collection $\{ A_i \in \mathfrak{N} | i=1, \dots, n \}$ of operators, there is a normal state $\psi$ of $\mathfrak{N}$ such that
\begin{equation}
| \psi'(A_i) - \psi(A_i) | < \epsilon \ \ \ \ (i=1, \dots, n).
\end{equation}
It means that a finite number of measurements with finite accuracy cannot distinguish $\psi$ and $\psi'$. Therefore, we examine only normal states of a von Neumann algebra.

\section{The case of quantum mechanics of finite degrees of freedom}
\label{finite-section}
In this section, we examine KCBS inequality in a finite von Neumann algebra because quantum mechanics of finite degrees of freedom is described with a finite dimensional Hilbert space, and the set of all operators on a finite dimensional Hilbert space is a finite von Neumann algebra. In a 3-dimensional Hilbert space, there is a state which violates KCBS inequality. Let define the following unit vectors.

\begin{equation}
\label{KCBS-Cabello}
\begin{split}
&\Psi_0= \frac{1}{\sqrt{1+\cos (1/5)\pi}}\begin{pmatrix} 1 \\ 0 \\ \sqrt{\cos (1/5)\pi} \end{pmatrix}, \ \ \ 
\Psi_1=\frac{1}{\sqrt{1+\cos (\pi/5)}}\begin{pmatrix} \cos (4/5)\pi \\ \sin (4/5)\pi \\ \sqrt{\cos (1/5)\pi} \end{pmatrix}, \\
&\Psi_2=\frac{1}{\sqrt{1+\cos (\pi/5)}}\begin{pmatrix} \cos (2/5)\pi \\ -\sin(2/5)\pi \\ \sqrt{\cos (1/5)\pi} \end{pmatrix}, \ \ \
\Psi_3=\frac{1}{\sqrt{1+\cos (\pi/5)}}\begin{pmatrix} \cos (2/5)\pi \\ \sin(2/5)\pi \\ \sqrt{\cos (1/5)\pi} \end{pmatrix}, \\
&\Psi_4=\frac{1}{\sqrt{1+\cos (\pi/5)}}\begin{pmatrix} \cos (4/5)\pi \\ -\sin(4/5)\pi \\ \sqrt{\cos (1/5)\pi} \end{pmatrix}, \ \ \
\Psi=\begin{pmatrix} 0 \\ 0 \\ 1 \end{pmatrix}.
\end{split}
\end{equation}

Let $R_i$ be the projection whose range is the subspace generated by $\Psi_i$ ($i=0, \dots ,4$). Then
$R_{i}R_{i+1}=0$ for any $i \in \{0, \dots, 4 \}$, where $i+1$ is understood mod $5$, and
\begin{equation}
\label{violation-KCBS-inequality}
 \langle \Psi, (R_{0}+R_{1}+R_{2}+R_{3}+R_{4} ) \Psi \rangle =  \frac{5 \cos (1/5)\pi}{1+\cos (1/5)\pi} = \sqrt{5} > 2. 
 \end{equation}
 
 Thus, the vector state induced by $\Psi$ violates KCBS inequality \cite{bub2009contextuality,badziag2011pentagrams,cabello2010non,cabello2013simple}. 
 
On the other hand, a mixed state 
\begin{equation}
 \tau_{3}(\cdot)=\text{Tr} \left( \left( \frac{1}{3}I \right) \cdot  \right) 
\end{equation}
  does not violate KCBS inequality in a 3-dimensional Hilbert space, where $I$ is an identity operator. In other words,
\begin{equation}
\text{Tr} \left( \left( \frac{1}{3}I \right) \left( P_{0} +P_{1}+P_{2}+P_{3} +P_{4} \right) \right) \leq 2
\end{equation}
for any non-zero projections $P_{i}$ such that $P_iP_{i+1}=0$ for any $i \in \{0, \dots, 4 \}$, where $i+1$ is understood mod $5$,
because $\text{Tr}(P_{i})$ is the dimension of $P_{i}$ and the number of two-dimensional projections is less than $2$. This result can be extended to a tracial state. 

\begin{definition} \cite[p.505]{kadison1997fundamentals}
Let $\mathfrak{N}$ be a finite von Neumann algebra, and let $\tau$ be a state of $\mathfrak{N}$. $\tau$ is called a tracial state if $\tau(AB)=\tau(BA)$ for any $A,B \in \mathfrak{N}$.
\end{definition}

\begin{theorem}
\label{trace-state}
Let $\mathfrak{N}$ be a finite von Neumann algebra, and let $\tau$ be a tracial state of $\mathfrak{N}$. Then $\tau$ does not violate KCBS inequality.
\end{theorem}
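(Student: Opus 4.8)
The goal is to show that whenever $P_0,\dots,P_4\in\mathfrak N$ are projections with $P_iP_{i+1}=0$ (indices understood mod $5$), one has $\tau(P_0+P_1+P_2+P_3+P_4)\le 2$. The plan is to exploit two standard features of a tracial state $\tau$ on a finite von Neumann algebra: the \emph{modularity} of the dimension function, $\tau(P\vee Q)+\tau(P\wedge Q)=\tau(P)+\tau(Q)$ for projections $P,Q$ (which follows from the parallelogram law $(P\vee Q)-Q\sim P-(P\wedge Q)$ together with the fact that $\tau$ is constant on Murray--von Neumann equivalence classes), and the elementary remark that a family of mutually orthogonal projections has trace-sum at most $\tau(I)=1$. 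One should note at the outset that merely using $P_iP_{i+1}=0\Rightarrow\tau(P_i)+\tau(P_{i+1})\le 1$ yields only $\sum_i\tau(P_i)\le 5/2$, the linear-programming bound for the odd cycle $C_5$; the modular law is what will turn this into the integral bound $2$.

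The key step is to introduce the ``diagonal'' meets $D_i:=P_i\wedge P_{i+2}$ for $i=0,\dots,4$ (indices mod $5$) and to prove that $D_0,\dots,D_4$ are \emph{pairwise} orthogonal. The mechanism is that for any $i\neq j$ some index of the pair $\{i,i+2\}$ is a mod-$5$ neighbour of some index of $\{j,j+2\}$, so that $D_i$ and $D_j$ lie below two orthogonal $P$'s and hence annihilate each other: for $i=0$ one has $D_0\le P_0\le P_1^{\perp}$ while $D_1\le P_1$; $D_0\le P_0\le P_4^{\perp}$ while $D_2\le P_4$; $D_0\le P_2\le P_3^{\perp}$ while $D_3\le P_3$; and $D_0\le P_2\le P_1^{\perp}$ while $D_4\le P_1$, and the remaining pairs follow from the cyclic symmetry of the configuration. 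Consequently $D_0+\dots+D_4$ is a projection, so $\sum_{i=0}^4\tau(D_i)\le\tau(I)=1$.

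Next I would record, for each $i$, the local inequality: since $P_{i+1}$ is orthogonal to both $P_i$ and $P_{i+2}$ it is orthogonal to $P_i\vee P_{i+2}$, whence $\tau(P_{i+1})+\tau(P_i\vee P_{i+2})\le\tau(I)=1$; substituting the modular law $\tau(P_i\vee P_{i+2})=\tau(P_i)+\tau(P_{i+2})-\tau(D_i)$ gives
\[
\tau(P_i)+\tau(P_{i+1})+\tau(P_{i+2})\le 1+\tau(D_i)\qquad(i=0,\dots,4).
\]
Summing these five inequalities, and observing that each $\tau(P_k)$ occurs in exactly three of the left-hand triples, one obtains $3\sum_{i=0}^4\tau(P_i)\le 5+\sum_{i=0}^4\tau(D_i)\le 6$, i.e.\ $\sum_{i=0}^4\tau(P_i)\le 2$, which is precisely the assertion that $\tau$ does not violate KCBS inequality.

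I expect the main obstacle to be exactly the pairwise-orthogonality claim in the second paragraph: it is stronger than the ``each $D_i$ is orthogonal to its cyclic neighbours'' statement one might first guess, and that weaker statement would recover only the $5/2$ bound. Everything else is bookkeeping --- the modular law is classical for finite von Neumann algebras --- and in particular no reduction to a faithful trace, to a type classification, or to a finite-dimensional model is needed, so the same argument covers the tracial matrix states mentioned earlier and any $\mathrm{II}_1$ situation at once.
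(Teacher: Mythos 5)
Your argument is correct, and it is genuinely different in structure from the paper's proof, although both rest on the same key fact: the trace identity $\tau(P\vee Q)+\tau(P\wedge Q)=\tau(P)+\tau(Q)$, which the paper also derives from the parallelogram law $(P\vee Q)-Q\sim P-(P\wedge Q)$ (Kadison--Ringrose, Theorem 6.1.7). I checked the two points where your proof could fail and both hold: the five diagonal meets $D_i=P_i\wedge P_{i+2}$ are indeed pairwise orthogonal (for every pair $i\neq j$ one can choose legs $P_a\geq D_i$ and $P_b\geq D_j$ with $P_aP_b=0$, exactly as you list for $D_0$ and by cyclic symmetry for the rest), and each $\tau(P_k)$ does occur in exactly three of the five triples, so the summation gives $3\sum_k\tau(P_k)\leq 5+\sum_i\tau(D_i)\leq 6$. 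The paper instead argues asymmetrically: it singles out the pair $(P_0,P_3)$, uses orthomodularity to write $P_0=(P_0\wedge P_3)\vee(P_0\wedge(P_0\wedge P_3)^{\perp})$, and chains the inequalities $\tau(P_1)+\tau(P_2)\leq 1-\tau(P_0\wedge P_3)$, $\tau(P_0\wedge(P_0\wedge P_3)^{\perp})+\tau(P_3)\leq\tau(P_0\vee P_3)$, and $P_0\vee P_3\perp P_4$ to reach the bound $2$. Your version buys symmetry and a transparent accounting of why the bound drops from the odd-cycle LP value $5/2$ to $2$ (the entire excess is absorbed by $\sum_i\tau(D_i)\leq 1$), and it avoids any explicit appeal to orthomodularity; the paper's version needs only one meet, one join, and one orthomodular decomposition rather than five applications of the modular law. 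Both apply verbatim to any tracial state on any finite von Neumann algebra.
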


\begin{proof}
Let $P_0, \dots, P_4$ be projections in $\mathfrak{N}$ such that $P_iP_{i+1}=0$ for any $i \in \{0, \dots, 4 \}$, where $i+1$ is understood mod $5$. 
Then
\begin{equation}
\label{e1_finite}
 1 - \tau(P_0 \wedge P_3) \geq \tau(P_1) + \tau(P_2) 
 \end{equation}
since $I - P_0 \wedge P_3 \geq I - P_0 \geq P_1$ and $I - P_0 \wedge P_3 \geq I - P_3 \geq P_2$, and $P_1P_2=0$. Because the set of all projections of $\mathfrak{N}$ is  an orthomodular poset \cite[Proposition 1.3]{maeda1989probability}, $P_0=(P_0 \wedge P_3) \vee (P_0 \wedge (P_0 \wedge P_3)^{\perp})$.
This equation and $(P_0 \wedge P_3) \perp (P_0 \wedge (P_0 \wedge P_3)^{\perp})$ imply
\begin{equation}
\label{e2_finite}
\tau(P_0)=\tau(P_0 \wedge P_3) + \tau(P_0 \wedge (P_0 \wedge P_3)^{\perp}).
\end{equation}
Note that
\begin{equation}
\label{trace-additivity}
\tau(P) + \tau(Q) = \tau(P \vee Q) + \tau(P \wedge Q)
\end{equation}
for any $P,Q \in \mathfrak{N}$ because there is an isometry $V \in \mathfrak{N}$ such that $P \vee Q - Q =VV^{*}$ and $P - P \wedge Q=V^{*}V$ \cite[Theorem 6.1.7]{kadison1997fundamentals}.
Equation (\ref{trace-additivity}), $(P_0 \wedge (P_0 \vee P_3)^{\perp}) \wedge P_3 = 0$, and $(P_0 \wedge (P_0 \wedge P_3)^{\perp}) \vee P_3 \leq P_0 \vee P_3$ entail
\begin{equation}
\label{e3_finite}
\begin{split}
&\tau(P_0 \wedge (P_0 \wedge P_3)^{\perp}) + \tau(P_3) \\
&=\tau((P_{0} \wedge (P_{0} \vee P_{3})^{\perp}) \vee P_{3})+\tau((P_{0} \wedge (P_{0} \wedge P_{3})^{\perp}) \wedge P_{3}) \\
&=\tau((P_{0} \wedge (P_{0} \vee P_{3})^{\perp}) \vee P_{3}) \\
&\leq \tau(P_0 \vee P_3).
\end{split}
\end{equation}

Thus
\begin{equation}
\label{dimension-inequality-finite}
 \begin{split}
&\tau(P_0)+\tau(P_1)+\tau(P_2)+\tau(P_3)+\tau(P_4) \\
&\leq \tau(P_0) + 1- \tau(P_0 \wedge P_3)+\tau(P_3)+\tau(P_4) \ \ \ (\because \text{Inequality (\ref{e1_finite})}) \\
&=1+\tau(P_0 \wedge (P_0 \wedge P_3)^{\perp})+\tau(P_3)+\tau(P_4) \ \ \ (\because \text{Equation (\ref{e2_finite})}) \\
&\leq 1+\tau(P_0 \vee P_3)+\tau(P_4) \ \ \ (\because \text{Inequality (\ref{e3_finite})}) \\
&\leq 2 \ \ \ (\because P_0 \vee P_3 \perp P_4).
\end{split} 
\end{equation}

\end{proof}

Let $\mathbb{B}(\mathcal{H}_n)$ be the set of all operators on a $n$-dimensional Hilbert space $\mathcal{H}_n$, and let $\tau_{n}$ be a state of $\mathbb{B}(\mathcal{H}_n)$ such that
\begin{equation}
\label{trace}
 \tau_{n}(A)=\text{Tr} \left( \left( \frac{1}{n}I \right) A \right) 
\end{equation}
for any $A \in \mathbb{B}(\mathcal{H}_n)$, where $I$ is an identity operator on $\mathcal{H}_n$. Since $\tau_{n}$ is a tracial state of $\mathbb{B}(\mathcal{H}_n)$ \cite[Example 8.1.2]{kadison1997fundamentals}, $\tau_{n}$ does not violate KCBS inequality.  

In the case of the set of all operators on a 2-dimensional Hilbert space $\mathcal{H}_2$, Theorem \ref{trace-state} and Equation (\ref{trace}) entail
\begin{equation}
\label{kcbs-trace-inequality}
\frac{1}{2}\text{Tr}(P_0+P_1+P_2+P_3+P_4) \leq 2
\end{equation}
for any projections $P_0, \dots, P_4 \in \mathbb{B}(\mathcal{H}_2)$ such that $P_iP_{i+1}=0$ for any $i \in \{0, \dots, 4 \}$, where $i+1$ is understood mod $5$. By Inequality (\ref{kcbs-trace-inequality}),
\begin{equation}
d(P_{0})+d(P_{1})+d(P_{2})+d(P_{3})+d(P_{4}) \leq 4,
\end{equation}
where $d(P_{i})$ is the dimension of $P_{i}\mathcal{H}_{2}$ ($i=0, \dots, 4$). Thus, there is a projection $P_j \in \{ P_0, \dots , P_4 \}$ such that $P_j=0$. Then, for any unit vector $\Psi \in \mathcal{H}_{2}$,
\begin{equation}
\langle \Psi, (P_0+P_1+P_2+P_3+P_4) \Psi \rangle \leq 2 
\end{equation}
since $P_iP_{i+1}=0$.
Therefore any state of $\mathbb{B}(\mathcal{H}_2)$ does not violate KCBS inequality \cite[Observation 1]{guhne2014bounding}.
It is similar to a restriction of Bell-Kochen-Specker Theorem because this theorem cannot apply to $\mathbb{B}(\mathcal{H}_2)$.

Theorem \ref{trace-state} and the following part show that there is a state which does not violate KCBS inequality in the case of quantum mechanics of finite degrees of freedom. 

\section{The case of algebraic quantum field theory}
\label{type-III-section}
Bell inequality has been investigated in algebraic quantum field theory as well as quantum mechanics of finite degrees of freedom \cite{halvorson2000generic,kitajima2013epr,landau1987violation,summers1987bell,summers1987abell,summers1987maximal,summers1988maximal}. For example, it is shown that many normal states violate Bell inequality for any two spacelike separated regions \cite{halvorson2000generic,kitajima2013epr}.



In this section, we examine KCBS inequality in algebraic quantum field theory. Although this inequality is state-dependent as well as Bell inequality, it does not require spacelike separated measurements. The argument about KCBS inequality is made in one region rather than two spacelike separated regions. It is a difference between Bell inequality and KCBS inequality.

Algebraic quantum field theory exists in two versions: the Haag-Araki theory which uses von Neumann algebras on a Hilbert space, and the Haag-Kastler theory which uses abstract C*-algebras. The Haag-Kastler theory looks different from the traditional formalism of quantum mechanics because it lacks a Hilbert space. But GNS representation theorem shows that a Hilbert space is hidden inside a C*-algebra. In other words, by GNS representation of a C*-algebra $\mathfrak{A}$, we can get a Hilbert space and self-adjoint operators which are corresponding to self-adjoint elements in $\mathfrak{A}$. In the Haag-Araki theory, to get the traditional formalism, a representation induced by a state is chosen. One of important representations is a vacuum representation. For example, Doplicher-Haag-Roberts developed the superselection theory of this representation \cite{baumgartel1995operatoralgebraic, halvorson2006algebraic}.

Here we adopt the Haag-Araki theory and a vacuum representation. In this theory, each bounded open region $\mathcal{O}$ in the Minkowski space is associated with a von Neumann algebra $\mathfrak{N}(\mathcal{O})$ on a Hilbert space $\mathcal{H}$. Such a von Neumann algebra is called a local algebra. Under usual axioms, the vacuum vector is a cyclic and separating vector for all local algebras \cite[Corollary 1.3.3]{baumgartel1995operatoralgebraic}, and a typical local algebra is a type III factor \cite[Section V.2]{haag1996local}. Moreover, the funnel property is sometimes assumed. It asserts that there exists a type I factor $\mathfrak{N}$ such that $\mathfrak{N}(\mathcal{O}) \subset \mathfrak{N} \subset \mathfrak{N}(\tilde{\mathcal{O}})$ for any pair $(\mathcal{O}, \tilde{\mathcal{O}})$ of double cones in the Minkowski space such that $\bar{\mathcal{O}} \subset \tilde{\mathcal{O}}$, where $\bar{\mathcal{O}}$ is the closure of $\mathcal{O}$ \cite[Section 2.4]{halvorson2006algebraic}. If the net of local algebras on a Hilbert space $\mathcal{H}$ satisfies the funnel property and the vacuum vector is cyclic and separating for any local algebra, $\mathcal{H}$ is separable \cite[Proposition 63]{halvorson2006algebraic}. Thus, we examine a type III factor on a separable Hilbert space in this section.

In Theorem \ref{trace-state}, it was shown that a tracial state of a finite von Neumann algebra does not violate KCBS inequality.
Its counterpart does not exist in algebraic quantum field theory because there is no tracial state of a type III factor. Moreover, there is no pure normal state in a type III factor \cite[Section 4]{clifton2001entanglement} while any pure normal state violates KCBS inequality in quantum mechanics of finite degrees of freedom. Thus, the argument of KCBS inequality in quantum mechanics of finite degrees of freedom cannot apply to algebraic quantum field theory. In Theorem \ref{type-III} and Corollary \ref{type-III-corollary}, it is shown that any normal state of any local algebra violates KCBS inequality.
The following lemma plays an important role in the proof of Theorem \ref{type-III}.

\begin{lemma}
\label{kawahigashi}
\cite[Corollary 2.2]{kawahigashi2013normal}

Let $\mathfrak{N}$ be a type III factor on a separable Hilbert space, let $\mathfrak{M}$ be a finite dimensional C*-subalgebra of $\mathfrak{N}$, and let $\rho$ be a faithful normal state of $\mathfrak{N}$. For any finite set $\{ \psi_1, \dots , \psi_k \}$ of normal states of $\mathfrak{N}$, there exists an unitary operator $U \in \mathfrak{N}$ such that
\[ \psi_l(U^*AU)=\rho(A) \]
for any $A \in \mathfrak{M}$ and any $\psi_{l} \in \{ \psi_1, \dots , \psi_k \}$.
\end{lemma}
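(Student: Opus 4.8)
The plan is to recast the conclusion as a statement about realizing a prescribed restriction of a state on a unitarily conjugate copy of $\mathfrak{M}$, and then to build that copy by hand inside $\mathfrak{N}$ using the comparison theory of a type III factor together with a convexity (Lyapunov-type) argument for the joint range of the $\psi_l$. First I would reduce the problem. Writing $\mathfrak{M}=\bigoplus_j M_{n_j}(\mathbb{C})$ with matrix units $e^{(j)}_{pq}$ and identity $f=1_{\mathfrak{M}}$, and observing that $U^{*}e^{(j)}_{pq}U$ is again a system of matrix units for the conjugate algebra $U^{*}\mathfrak{M}U$, the requirement $\psi_l(U^{*}AU)=\rho(A)$ for all $A\in\mathfrak{M}$ is by linearity equivalent to producing a copy of $\mathfrak{M}$ whose matrix units $\tilde e^{(j)}_{pq}$ satisfy $\psi_l(\tilde e^{(j)}_{pq})=\rho(e^{(j)}_{pq})$ for every $l$, together with a unitary $U$ implementing $U^{*}e^{(j)}_{pq}U=\tilde e^{(j)}_{pq}$. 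I would absorb the orthocomplement $I-f$ as an extra diagonal entry so as to assume $f=I$, and then use a diagonalizing unitary $V\in\mathfrak{M}$ chosen so that the density matrix of $\rho|_{\mathfrak{M}}$ becomes diagonal after conjugation by $V$; since $V$ lies in $\mathfrak{M}\subseteq\mathfrak{N}$ it can be folded into the final $U$, so it suffices to treat the case in which $\rho|_{\mathfrak{M}}$ is \emph{diagonal}, i.e.\ to find orthogonal matrix units with $\psi_l(\tilde e^{(j)}_{pq})=\mu^{(j)}_p\,\delta_{pq}$ for prescribed masses $\mu^{(j)}_p\ge0$ summing to $1$ and \emph{independent of} $l$.

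The diagonal masses are where type III enters. A type III factor is nonatomic, so by the noncommutative Lyapunov (Akemann-type) theorem the joint range $\{(\psi_1(p),\dots,\psi_k(p)):p\text{ a projection in }\mathfrak{N}\}$ is a convex subset of $\mathbb{R}^k$ containing $(0,\dots,0)$ and $(1,\dots,1)$; hence the whole diagonal segment $\{t(1,\dots,1):t\in[0,1]\}$ lies in it. I would enumerate the desired diagonal projections as $q_1,\dots,q_N$ with targets $\mu_1,\dots,\mu_N$ and build them inductively: given $q_1,\dots,q_{i-1}$, the remaining corner has identity $e_i=I-\sum_{r<i}q_r$ with $\psi_l(e_i)=1-\sum_{r<i}\mu_r$, a value \emph{the same for every} $l$; applying the convexity statement inside the (again type III) corner $e_i\mathfrak{N}e_i$ yields a subprojection $q_i\le e_i$ with $\psi_l(q_i)=\mu_i$ simultaneously for all $l$. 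This delivers the orthogonal diagonal projections summing to $I$.

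The subtle part, which I expect to be the main obstacle, is realizing the \emph{off-diagonal} entries: I must connect the $q$'s by partial isometries $\tilde e^{(j)}_{pq}$ that are genuine matrix units and on which every $\psi_l$ vanishes. Because all nonzero projections of a type III factor are Murray--von Neumann equivalent, a connecting partial isometry exists, and any two differ by a unitary of the corner $\tilde e^{(j)}_{pp}\mathfrak{N}\tilde e^{(j)}_{pp}$, itself type III. Fixing one partial isometry $w$ from $q_1$ to $q_p$ and setting $\phi_l(X)=\psi_l(Xw)$ defines finitely many normal functionals on that corner, and what I need is a unitary $u$ in the corner with $\phi_l(u)=0$ for all $l$. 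I would establish this using the abundance of unitaries in a properly infinite factor: choosing an infinite system of orthogonal equivalent subprojections of the corner and a shift/phase-rotation unitary supported on them, the normality of the $\phi_l$ makes the relevant values controllable, and an intermediate-value/averaging argument over the phases drives the finitely many numbers $\phi_l(u)$ simultaneously to $0$. Having fixed the first column $\tilde e^{(j)}_{p1}$ this way, the remaining matrix units $\tilde e^{(j)}_{pq}=\tilde e^{(j)}_{p1}(\tilde e^{(j)}_{q1})^{*}$ inherit vanishing $\psi_l$-values.

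Finally I would assemble the unitary. The constructed $\tilde e^{(j)}_{pq}$ form a unital copy of $\mathfrak{M}$; since the original matrix units $e^{(j)}_{pq}$ give another unital copy and all nonzero projections of $\mathfrak{N}$ are equivalent, the standard partial-isometry construction produces $W\in\mathfrak{N}$ with $W^{*}e^{(j)}_{pq}W=\tilde e^{(j)}_{pq}$. Combining $W$ with the earlier diagonalizer $V\in\mathfrak{M}$ yields a single $U\in\mathfrak{N}$ with $\psi_l(U^{*}AU)=\rho(A)$ for every $A\in\mathfrak{M}$ and every $l$, as required. The only genuinely delicate step is the simultaneous annihilation of the off-diagonal functionals in the third paragraph; everything else is reduction, the Lyapunov convexity of the joint range, and the comparison theory of type III factors.
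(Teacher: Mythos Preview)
First, note that the paper does not supply a proof of this lemma at all: it is quoted verbatim as Corollary~2.2 of \cite{kawahigashi2013normal} and used as a black box in the proof of Theorem~\ref{type-III}. So there is no argument in the paper against which your attempt can be compared.

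As for the attempt itself, the overall architecture (reduce to a unital copy, diagonalize $\rho|_{\mathfrak{M}}$, realize the diagonal via a Lyapunov/Akemann--Anderson convexity argument in the successive type~III corners, then build the off-diagonal partial isometries, and finally conjugate) is a sensible plan, and the diagonal part is essentially correct once one observes that faithfulness of $\rho$ makes all the target masses $\mu^{(j)}_p=\rho(e^{(j)}_{pp})$ strictly positive, so the inductive corners never collapse.

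The proof breaks down, however, at the off-diagonal step, and not only at the point you flag as ``delicate''. Your final claim, that once the first column $\tilde e^{(j)}_{p1}$ has been chosen with $\psi_l(\tilde e^{(j)}_{p1})=0$, the remaining matrix units $\tilde e^{(j)}_{pq}=\tilde e^{(j)}_{p1}(\tilde e^{(j)}_{q1})^{*}$ ``inherit vanishing $\psi_l$-values'', is simply false: states are not multiplicative, so $\psi_l(ab^{*})$ is not determined by $\psi_l(a)$ and $\psi_l(b)$. Concretely, take $n_j=3$ and a vector state $\psi=\langle\xi,\cdot\,\xi\rangle$ with $\xi=a_2\xi_2+a_3\xi_3$, $\xi_p\in q_p\mathcal H$, $a_2,a_3\neq 0$. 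Since $\tilde e^{(j)}_{p1}$ has initial space $q_1\mathcal H$ and $\xi$ has no $q_1$-component, one gets $\psi(\tilde e^{(j)}_{21})=\psi(\tilde e^{(j)}_{31})=0$ automatically, for \emph{any} choice of first-column isometries; yet $\psi(\tilde e^{(j)}_{23})=\bar a_2 a_3\langle\xi_2,\tilde e^{(j)}_{23}\xi_3\rangle$, which is generically nonzero. Thus vanishing of the first column imposes no constraint whatsoever on $\psi_l(\tilde e^{(j)}_{pq})$ for $p,q\ge 2$. To make the scheme work you would have to control \emph{all} $\binom{n_j}{2}$ off-diagonal entries simultaneously for all $k$ states, while the only freedom left after fixing the diagonal is the $n_j-1$ corner-unitaries multiplying the first-column isometries; this is a genuinely harder problem than the one you solve, and your sketch (``intermediate-value/averaging over phases'') does not address it. In short, the reduction and the diagonal construction are fine, but the off-diagonal argument as written does not prove what is needed.
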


By using Lemma \ref{kawahigashi}, we show Theorem \ref{type-III}.

\begin{theorem}
\label{type-III}
Let $\mathfrak{N}$ be a type III factor on a separable Hilbert space $\mathcal{H}$. For any finite set $\{ \psi_1, \dots , \psi_k \}$ of normal states of $\mathfrak{N}$ and any real number $\epsilon \in (0, \sqrt{5}-2)$, there are projections $P_0, \dots P_4$ such that $P_i P_{i+1}=0$ for any $i \in \{0, \dots, 4 \}$, where $i+1$ is understood mod $5$, and
\[ \psi_l(P_{0}+P_{1}+P_{2}+P_{3}+P_{4}) \geq \sqrt{5} - \epsilon > 2, \]
for any $\psi_l \in \{ \psi_1, \dots , \psi_k \}$.
\end{theorem}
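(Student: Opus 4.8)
The plan is to transplant the maximally violating three–dimensional pentagon configuration of Section~\ref{finite-section} into $\mathfrak{N}$ and then invoke Lemma~\ref{kawahigashi} to rotate, by a single unitary, every state $\psi_l$ so that on the relevant finite dimensional subalgebra it agrees — up to $\epsilon$ — with the vector state induced by $\Psi$. Concretely, I would produce a unital copy $\mathfrak{M}\cong M_3(\mathbb{C})$ of the matrix algebra inside $\mathfrak{N}$ containing projections $Q_0,\dots,Q_4$ playing the role of $R_0,\dots,R_4$, arrange a \emph{faithful} normal state $\rho$ of $\mathfrak{N}$ whose values on $Q_0,\dots,Q_4$ are close to those of $\langle\Psi,\,\cdot\,\Psi\rangle$, and then set $P_i:=U^{*}Q_iU$ for the unitary $U$ supplied by the lemma.

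For the embedding: since $\mathfrak{N}$ is a type III factor it is properly infinite, so $I=E_1+E_2+E_3$ with $E_1,E_2,E_3$ mutually orthogonal equivalent projections; partial isometries implementing the equivalences give a system of $3\times 3$ matrix units generating a unital C*-subalgebra $\mathfrak{M}\subseteq\mathfrak{N}$ with $\mathfrak{M}\cong M_3(\mathbb{C})$. Fixing such an isomorphism, let $Q_0,\dots,Q_4\in\mathfrak{M}$ be the images of the rank-one projections of (\ref{KCBS-Cabello}); then $Q_iQ_{i+1}=0$ for all $i$ (mod $5$). Next I need a normal state of $\mathfrak{N}$ restricting on $\mathfrak{M}$ to $\langle\Psi,\,\cdot\,\Psi\rangle$: let $E\in\mathfrak{M}$ be the minimal projection of $\mathfrak{M}$ corresponding to the projection onto $\mathbb{C}\Psi$, note that the corner $E\mathfrak{N}E$ acts on the separable space $E\mathcal{H}$ and hence carries a faithful normal state $\phi_0$, and set $\omega(A):=\phi_0(EAE)$ for $A\in\mathfrak{N}$. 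Since $E$ is minimal in $\mathfrak{M}$ we have $EAE=\langle\Psi,A\Psi\rangle E$ for $A\in\mathfrak{M}$, so $\omega(Q_i)=\langle\Psi,R_i\Psi\rangle$ and, by (\ref{violation-KCBS-inequality}), $\omega(Q_0+Q_1+Q_2+Q_3+Q_4)=\sqrt5$.

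Now I would pass to a faithful state losing only $\epsilon$. Because $\mathcal{H}$ is separable, $\mathfrak{N}$ has a faithful normal state $\sigma$; pick $t$ with $0<t<\epsilon/\sqrt5$ and put $\rho:=(1-t)\omega+t\sigma$, a faithful normal state of $\mathfrak{N}$ with
\[
\rho(Q_0+Q_1+Q_2+Q_3+Q_4)\ \ge\ (1-t)\sqrt5\ >\ \sqrt5-\epsilon .
\]
Applying Lemma~\ref{kawahigashi} to the finite dimensional subalgebra $\mathfrak{M}$, the faithful normal state $\rho$, and the family $\{\psi_1,\dots,\psi_k\}$ yields a unitary $U\in\mathfrak{N}$ with $\psi_l(U^{*}AU)=\rho(A)$ for all $A\in\mathfrak{M}$ and all $l$. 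Setting $P_i:=U^{*}Q_iU$, each $P_i$ is a projection, $P_iP_{i+1}=U^{*}Q_iQ_{i+1}U=0$, and
\[
\psi_l(P_0+P_1+P_2+P_3+P_4)=\rho(Q_0+Q_1+Q_2+Q_3+Q_4)\ \ge\ \sqrt5-\epsilon\ >\ 2
\]
for every $l$, since $\epsilon<\sqrt5-2$. This is exactly the assertion.

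The nontrivial point is not the unitary rotation — that is precisely Lemma~\ref{kawahigashi} — but the requirement that the state fed to that lemma be \emph{faithful}. The state $\langle\Psi,\,\cdot\,\Psi\rangle$ is pure, so its restriction to the copy of $M_3(\mathbb{C})$ is not faithful, and no faithful normal state of $\mathfrak{N}$ can reproduce it exactly on $\mathfrak{M}$; this is why the statement must carry an $\epsilon$. The way around it is the two-step construction above: first extend the vector state to a normal state of $\mathfrak{N}$ through the corner $E\mathfrak{N}E$ (using that corners of $\mathfrak{N}$ by a nonzero projection act on a separable Hilbert space and hence have faithful normal states), then perturb by a genuinely faithful normal state. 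The hypothesis $\epsilon\in(0,\sqrt5-2)$ is exactly what preserves $\sqrt5-\epsilon>2$ after this perturbation.
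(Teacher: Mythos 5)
Your proof is correct, and its skeleton coincides with the paper's: matrix units giving a unital copy of $M_3(\mathbb{C})$ in $\mathfrak{N}$, the pentagon projections transplanted into that copy, a faithful normal state $\rho$ achieving at least $\sqrt5-\epsilon$ on their sum, and finally Lemma~\ref{kawahigashi} to conjugate all the $\psi_l$ onto $\rho$ over $\mathfrak{M}$ simultaneously. Where you genuinely diverge is in the production of $\rho$, which is the technical bulk of the paper's argument. The paper passes to a $*$-isomorphic copy of $\mathfrak{N}$ admitting a cyclic and separating vector, realizes the state $\langle\Psi,\cdot\,\Psi\rangle$ as a vector state there, builds a second vector $\Phi^{\perp}$ supported on the orthogonal complement of the support projection, and then verifies faithfulness of the mixture $\bigl(1-\tfrac{\epsilon}{\sqrt5}\bigr)\langle\Phi,\cdot\,\Phi\rangle+\tfrac{\epsilon}{\sqrt5}\langle\Phi^{\perp},\cdot\,\Phi^{\perp}\rangle$ by an explicit support-projection computation. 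You instead observe that $\mathfrak{N}$, acting on a separable Hilbert space, is $\sigma$-finite and hence carries \emph{some} faithful normal state $\sigma$, and that $(1-t)\omega+t\sigma$ is automatically faithful for any $t>0$; choosing $t<\epsilon/\sqrt5$ gives the same bound with a two-line argument. This is a cleaner and more robust route to the same $\rho$ (your corner construction $\omega(A)=\phi_0(EAE)$ is fine, though strictly unnecessary: the vector state of any unit vector in $V_{33}\mathcal{H}$ already extends $\langle\Psi,\cdot\,\Psi\rangle$ from $\mathfrak{M}$ to a normal state of $\mathfrak{N}$, which is what the paper uses). Your closing remark correctly identifies why the $\epsilon$ is unavoidable: a faithful normal state of $\mathfrak{N}$ restricts to a faithful state of $\mathfrak{M}\cong M_3(\mathbb{C})$, so it cannot reproduce the rank-one vector state exactly, and the hypothesis $\epsilon\in(0,\sqrt5-2)$ is precisely what keeps the perturbed value above $2$.
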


\begin{proof}
Let $\{ \psi_1, \dots , \psi_k \}$ be a finite set of of normal states of $\mathfrak{N}$ and let $\epsilon$ be a real number in $(0, \sqrt{5}-2)$.

Since $\mathfrak{N}$ is a type III factor, there is a system of matrix units $\{ V_{ij} \}_{i,j=1,2,3}$ such that $V_{ij}V_{kl}=\delta_{jk}V_{il}$, $V_{ij}^{*}=V_{ji}$, and $\sum_{i=1}^3 V_{ii} = I$, where $I$ is an identity operator on $\mathcal{H}$ \cite[Lemma 6.10]{maeda1989probability}.

We can construct projections $R_0, \dots , R_4$ in $\mathfrak{N}$ in a similar way to construct projections $R_{i}$ in Section \ref{finite-section}.
\[ R_0=\frac{1}{1+\cos (1/5)\pi} ( V_{11} + \sqrt{\cos (1/5)\pi} \cdot V_{13} + \sqrt{\cos (1/5)\pi} \cdot V_{31} + \cos (1/5)\pi \cdot V_{33} ), \]
\[ \begin{split}
R_1=\frac{1}{1+\cos (1/5)\pi} &( \cos^2 (4/5)\pi \cdot V_{11} + \sin (4/5)\pi \cdot \cos (4/5)\pi \cdot V_{12} \\
&+ \cos (4/5)\pi \cdot \sqrt{\cos (1/5)\pi} \cdot V_{13} +\sin (4/5)\pi \cdot \cos (4/5)\pi \cdot V_{21} \\
&+ \sin^2 (4/5)\pi \cdot V_{22} + \sin (4/5)\pi \cdot \sqrt{\cos(1/5)\pi} \cdot V_{23} \\
&+\cos (4/5)\pi \cdot \sqrt{\cos (1/5)\pi} \cdot V_{31} + \sin (4/5)\pi \cdot \sqrt{\cos (1/5)\pi} \cdot V_{32} \\
&+ \cos (1/5)\pi \cdot V_{33} ) ,
\end{split} \]
\[ \begin{split}
R_2=\frac{1}{1+\cos (1/5)\pi} &( \cos^2 (2/5)\pi \cdot V_{11} - \sin (2/5)\pi \cdot \cos (2/5)\pi \cdot V_{12} \\
&+ \cos (2/5)\pi \cdot \sqrt{\cos (1/5)\pi} \cdot V_{13} -\sin (2/5)\pi \cdot \cos (2/5)\pi \cdot V_{21} \\
&+ \sin^2 (2/5)\pi \cdot V_{22} - \sin (2/5)\pi \cdot \sqrt{\cos(1/5)\pi} \cdot V_{23} \\
&+\cos (2/5)\pi \cdot \sqrt{\cos (1/5)\pi} \cdot V_{31} - \sin (2/5)\pi \cdot \sqrt{\cos (1/5)\pi} \cdot V_{32} \\
&+ \cos (1/5)\pi \cdot V_{33} ),
\end{split} \]
\[ \begin{split}
R_3=\frac{1}{1+\cos (1/5)\pi} &( \cos^2 (2/5)\pi \cdot V_{11} + \sin (2/5)\pi \cdot \cos (2/5)\pi \cdot V_{12} \\
&+ \cos (2/5)\pi \cdot \sqrt{\cos (1/5)\pi} \cdot V_{13} +\sin (2/5)\pi \cdot \cos (2/5)\pi \cdot V_{21} \\
&+ \sin^2 (2/5)\pi \cdot V_{22} + \sin (2/5)\pi \cdot \sqrt{\cos(1/5)\pi} \cdot V_{23} \\
&+\cos (2/5)\pi \cdot \sqrt{\cos (1/5)\pi} \cdot V_{31} + \sin (2/5)\pi \cdot \sqrt{\cos (1/5)\pi} \cdot V_{32} \\
&+ \cos (1/5)\pi \cdot V_{33} ), 
\end{split} \]
\[ \begin{split}
R_4=\frac{1}{1+\cos (1/5)\pi} &( \cos^2 (4/5)\pi \cdot V_{11} - \sin (4/5)\pi \cdot \cos (4/5)\pi \cdot V_{12} \\
&+ \cos (4/5)\pi \cdot \sqrt{\cos (1/5)\pi} \cdot V_{13} -\sin (4/5)\pi \cdot \cos (4/5)\pi \cdot V_{21} \\
&+ \sin^2 (4/5)\pi \cdot V_{22} - \sin (4/5)\pi \cdot \sqrt{\cos(1/5)\pi} \cdot V_{23} \\
&+\cos (4/5)\pi \cdot \sqrt{\cos (1/5)\pi} \cdot V_{31} - \sin (4/5)\pi \cdot \sqrt{\cos (1/5)\pi} \cdot V_{32} \\
&+ \cos (1/5)\pi \cdot V_{33} ).
\end{split} \]
$R_0, \dots, R_{4}$ are projections in $\mathfrak{N}$, and
\begin{equation}
\label{KCBS-projection-type-III}
R_iR_{i+1}=0 \ \ \ (i \in \{0, \dots ,4 \}, \ \text{where} \ i+1 \  \text{is understood mod} \ 5).
\end{equation}
Let $\Psi$ be a unit vector in $V_{33}\mathcal{H}$. Then
\begin{equation}
\label{e2-type-III}
\langle \Psi, (R_0+R_1+R_2+R_3+R_4)\Psi \rangle = \frac{5 \cos (1/5)\pi}{1+\cos (1/5)\pi}=\sqrt{5}. 
\end{equation}

We will show that there is a faithful normal state $\rho$ such that
\begin{equation}
\label{e5-type-III}
\rho \left( \sum_{i=0}^4 R_i \right) \geq \sqrt{5} - \epsilon. 
\end{equation}
If a vector state of $\mathfrak{N}$ induced by $\Psi$ is faithful, it is proved because of Equation (\ref{e2-type-III}). Thus we assume that a vector state of $\mathfrak{N}$ induced by $\Psi$ is not faithful.

Since $\mathcal{H}$ is separable, there is a *-isomorphism $\pi$ such that a von Neumann algebra $\pi(\mathfrak{N})$ on a Hilbert space $\mathcal{K}$ admits a cyclic and separating vector \cite[Proposition 2.5.6]{bratteli1987operator}. By \cite[Theorem 2.5.31]{bratteli1987operator}, there is a unit vector $\Phi$ in $\mathcal{K}$ such that
\begin{equation}
\label{e3-type-III}
\langle \Psi, A \Psi \rangle = \langle \Phi, \pi(A) \Phi \rangle 
\end{equation}
for any $A \in \mathfrak{N}$.

Let $S$ be the support of the vector state of $\pi(\mathfrak{N})$ induced by $\Phi$. Then $\Phi$ is a separating vector for $S\pi(\mathfrak{N})S$. Let $\Phi'$ be a separating vector for $\pi(\mathfrak{N})$. Observe $S^{\perp}\Phi' \neq 0$ and define $\Phi^{\perp} := S^{\perp}\Phi'/\| S^{\perp}\Phi' \|$. Then $\Phi^{\perp}$ is a separating vector for $S^{\perp}\pi(\mathfrak{N})S^{\perp}$. Define a normal state $\omega$ of $\pi(\mathfrak{N})$ as
\begin{equation}
\label{e4-type-III}
\omega(B) := \left(1 - \frac{\epsilon}{\sqrt{5}} \right) \langle \Phi, B \Phi \rangle + \frac{\epsilon}{\sqrt{5}} \langle \Phi^{\perp}, B \Phi^{\perp} \rangle 
\end{equation}
for any operator $B \in \pi(\mathfrak{N})$.

Let $Q'$ be the support of $\omega$ and let $Q:= I - Q'$. Then 
\begin{equation}
0=\omega (Q)= \left(1 - \frac{\epsilon}{\sqrt{5}} \right) \langle \Phi, Q \Phi \rangle + \frac{\epsilon}{\sqrt{5}} \langle \Phi^{\perp}, Q \Phi^{\perp} \rangle .
\end{equation}
Since $ \langle \Phi, Q \Phi \rangle \geq 0$ and $\langle \Phi^{\perp}, Q \Phi^{\perp} \rangle \geq 0$, $ \langle \Phi, Q \Phi \rangle = \langle \Phi^{\perp}, Q \Phi^{\perp} \rangle = 0$.
$SQS \leq Q$ and $S^{\perp}QS^{\perp} \leq Q$ imply
\begin{equation}
\langle \Phi, SQS \Phi \rangle = \langle \Phi^{\perp},S^{\perp}QS^{\perp} \Phi^{\perp} \rangle = 0. 
\end{equation}
Thus $SQS=S^{\perp}QS^{\perp}=0$. 
\begin{equation}
\label{Q}
\begin{split}
Q &=Q^{2} \\
&=(SQS+SQS^{\perp}+S^{\perp}QS+S^{\perp}QS^{\perp})^{2} \\
&=(SQS^{\perp}+S^{\perp}QS)^{2} \\
&=SQS^{\perp}QS+S^{\perp}QSQS^{\perp}
\end{split}
\end{equation}
entails
\begin{equation}
\begin{split}
0 &=\omega(Q) \\
&=\omega(SQS^{\perp}QS+S^{\perp}QSQS^{\perp})  \\
&=\left(1 - \frac{\epsilon}{\sqrt{5}} \right) \langle \Phi, SQS^{\perp}QS \Phi \rangle + \frac{\epsilon}{\sqrt{5}} \langle \Phi^{\perp}, S^{\perp}QSQS^{\perp} \Phi^{\perp} \rangle 
\end{split}
\end{equation}

Since $ \langle \Phi, SQS^{\perp}QS \Phi \rangle \geq 0$ and $\langle \Phi^{\perp}, S^{\perp}QSQS^{\perp} \Phi^{\perp} \rangle \geq 0$, $SQS^{\perp}QS=S^{\perp}QSQS^{\perp}=0$. By Equation (\ref{Q}), $Q=0$. Therefore $\omega$ is a faithful normal state of $\pi(\mathfrak{N})$.

Let $\rho(A) := \omega (\pi(A))$ for any $A \in \mathfrak{N}$. Then $\rho$ is a faithful normal state of $\mathfrak{N}$, and
\begin{equation}
\label{e1-type-III}
\begin{split}
&\rho \left( \sum_{i=0}^{4} R_{i} \right) \\
&= \left(1 - \frac{\epsilon}{\sqrt{5}} \right) \Bigg\langle \Psi, \left( \sum_{i=0}^{4} R_{i} \right) \Psi \Bigg\rangle + \frac{\epsilon}{\sqrt{5}} \Bigg\langle \Phi^{\perp},  \pi \left( \sum_{i=0}^{4} R_{i} \right) \Phi^{\perp} \Bigg\rangle  \\
&\geq \sqrt{5} - \epsilon
\end{split}
\end{equation}
by Equations (\ref{e2-type-III}), (\ref{e3-type-III}), and (\ref{e4-type-III}). Thus Inequality (\ref{e5-type-III}) was shown.

Let $\mathfrak{M}$ be a C*-subalgebra of $\mathfrak{N}$ generated by $\{ V_{ij} \}_{i,j=1,2,3}$. Then $\sum_{i=1}^{5} R_{i}  \in \mathfrak{M}$. By Lemma \ref{kawahigashi} and Inequality (\ref{e1-type-III}), there is an unitary operator $U \in \mathfrak{N}$ such that
\begin{equation}
\label{e0-type-III}
\psi_{l} \left( U^{*} \left( \sum_{i=0}^{4} R_{i} \right) U \right) =\rho \left( \sum_{i=0}^{4} R_{i} \right) \geq \sqrt{5} - \epsilon
 \end{equation}
for any $\psi_{l} \in \{ \psi_1, \dots , \psi_k \}$.

Let $P_{i} := U^{*}R_{i}U$ for any $i \in \{0, \dots ,4 \}$. Then $P_{i}P_{i+1}=0$ for any $i \in \{0, \dots ,4 \}$ by Equation (\ref{KCBS-projection-type-III}). Inequality (\ref{e0-type-III}) and $\epsilon \in (0, \sqrt{5}-2)$ entail
\begin{equation}
\psi_{l} \left( \sum_{i=0}^{4} P_{i}  \right) \geq \sqrt{5} - \epsilon > 2
\end{equation}
for any $\psi_{l} \in \{ \psi_1, \dots , \psi_k \}$.
\end{proof}

According to Theorem \ref{type-III}, if we fix a finite set $\{ \psi_1, \dots, \psi_k \}$ of normal states, there are projections $P_0, \dots, P_4$ with which any normal state $\psi_l \in \{ \psi_1, \dots, \psi_k \}$ violates KCBS inequality. $k$ can be any number, say, $100$ trillion. Thus, roughly speaking, KCBS inequality is `almost' state-independent in algebraic quantum field theory.

As an immediate consequence of Theorem \ref{type-III}, we get

\begin{corollary}
\label{type-III-corollary}
Let $\mathfrak{N}$ be a type III factor on a separable Hilbert space. Any normal state of $\mathfrak{N}$ violates KCBS inequality.
\end{corollary}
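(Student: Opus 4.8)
The plan is to obtain this directly from Theorem \ref{type-III} by specializing to a single state, so essentially no new work is needed. Given an arbitrary normal state $\psi$ of $\mathfrak{N}$, I would apply Theorem \ref{type-III} to the singleton family $\{\psi_1\} := \{\psi\}$ (that is, with $k = 1$) together with any fixed choice of $\epsilon$ in the nonempty interval $(0,\sqrt{5}-2)$, for instance $\epsilon := (\sqrt{5}-2)/2$. The theorem then produces projections $P_0,\dots,P_4 \in \mathfrak{N}$ with $P_iP_{i+1}=0$ for all $i \in \{0,\dots,4\}$ (indices mod $5$) and
\[ \psi(P_0+P_1+P_2+P_3+P_4) \geq \sqrt{5}-\epsilon > 2 . \]
By Definition \ref{KCBS-definition}, this is precisely the assertion that $\psi$ violates KCBS inequality, and since $\psi$ was an arbitrary normal state of $\mathfrak{N}$, the corollary follows.

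I do not expect any obstacle: all the substance is already carried by Theorem \ref{type-III} (and hence by Lemma \ref{kawahigashi}), and this statement is a pure specialization to $k=1$. The only point worth noting is that Definition \ref{KCBS-definition} does not explicitly require the $P_i$ to be nonzero, so nothing extra must be verified; in any case nonvanishing is automatic, since the projections delivered by Theorem \ref{type-III} are of the form $U^*R_iU$ with $U$ a unitary in $\mathfrak{N}$ and each $R_i$ manifestly nonzero.
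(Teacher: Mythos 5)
Your proposal is correct and matches the paper exactly: the paper states the corollary as an immediate consequence of Theorem \ref{type-III}, which is precisely your specialization to the singleton family $k=1$ with a fixed admissible $\epsilon$. Nothing further is needed.
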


Since KCBS inequality is satisfied by any noncontextual hidden variable theory, Corollary \ref{type-III-corollary} shows a contradiction between predictions of algebraic quantum field theory and noncontextual hidden variable theories.


\section{Conclusion}
The noncontextuality condition states that a value of any observable is independent of which other compatible observable is measured jointly with it. If we assume that there is a noncontextual hidden variable theory, KCBS inequality holds. Thus, its violation shows a contradiction between predictions of quantum theory and noncontextual hidden variable theories. 

In Sections \ref{finite-section} and \ref{type-III-section}, we examined KCBS inequality in quantum mechanics of finite degrees of freedom and algebraic quantum field theory. It was shown that a tracial state does not violate KCBS inequality in the case of quantum mechanics of finite degrees of freedom in Theorem \ref{trace-state} and the following part. But the theorem cannot apply to a set of all bounded operators on an infinite dimensional Hilbert space because there is no tracial state in this case. It is important whether all normal states violate KCBS inequality or not in this case since it describes quantum mechanics of finite degrees of freedom. It is an open problem.

In Theorem \ref{type-III} and Corollary \ref{type-III-corollary}, we examine KCBS inequality in algebraic quantum field theory. Because a tracial states and a pure normal state do not exist in algebraic quantum field theory, the argument of KCBS inequality in quantum mechanics of finite degrees of freedom cannot apply to algebraic quantum field theory. In Corollary \ref{type-III-corollary}, it is shown that any normal state violates it in the case of algebraic quantum field theory. It is a difference between quantum mechanics of finite degrees of freedom and algebraic quantum field theory from a point of view of KCBS inequality.

\section*{Acknowledgments}
The authors thank two referees for helpful comments for an earlier version of this paper. The author is supported by the JSPS KAKENHI No.15K01123 and No.23701009.

\bibliographystyle{jplain}
\bibliography{kitajima}

\end{document}